\newcommand{\eqdef}{:=}
\renewcommand{\vec}[1]{\bm{#1}}		
\newcommand{\rvec}[1]{\mathbbm{#1}} 		
\newcommand{\rmat}[1]{\mathbbm{#1}} 	
\newcommand{\E}{\mathsf{E}}		
\newcommand{\Var}{\mathsf{V}}			
\newcommand{\stdset}[1]{\mathbbmss{#1}}	
\newcommand{\set}[1]{\mathcal{#1}}		
\newcommand{\CN}{\mathcal{CN}}			
\newcommand{\herm}{\mathsf{H}}			
\newcommand{\T}{\mathsf{T}}				
\newtheorem{definition}{Definition}
\newtheorem{proposition}{Proposition}
\newtheorem{remark}{Remark}
\IEEEoverridecommandlockouts\IEEEpubid{\makebox[\columnwidth]{ 978-1-6654-3540-6/22~\copyright~2022 IEEE \hfill} \hspace{\columnsep}\makebox[\columnwidth]{ }}
\begin{document}

\title{Joint optimal beamforming and power control in cell-free massive MIMO
\thanks{L. Miretti, R. L. G. Cavalcante, and S. Sta\'nczak acknowledge the financial support by the Federal Ministry of Education and Research of Germany in the programme of “Souverän. Digital. Vernetzt.” Joint project 6G-RIC, project identification number: 16KISK020K and 16KISK030.}
}

\author{
\IEEEauthorblockN{Lorenzo Miretti, Renato L. G. Cavalcante, Sławomir Sta\'nczak}
\IEEEauthorblockA{\emph{Fraunhofer Institute for Telecommunications Heinrich-Hertz-Institut} \\
\emph{Technical University of Berlin}\\
Berlin, Germany \\
\{lorenzo.miretti,renato.cavalcante,slawomir.stanczak\}@hhi.fraunhofer.de}
}

\maketitle

\begin{abstract} We derive a fast and optimal algorithm for solving practical weighted max-min SINR problems in cell-free massive MIMO networks. For the first time, the optimization problem jointly covers long-term power control and distributed beamforming design under imperfect cooperation. In particular, we consider user-centric clusters of access points cooperating on the basis of possibly limited channel state information sharing. Our optimal algorithm merges powerful power control tools based on interference calculus with the recently developed \textit{team} theoretic framework for distributed beamforming design. In addition, we propose a variation that shows faster convergence in practice.
\end{abstract}

\begin{IEEEkeywords}
team MMSE, cell-free massive MIMO, power control, distributed beamforming, user-centric
\end{IEEEkeywords}

\section{Introduction} 
Cell-free massive multiple-input multiple-output (MIMO) is one of the key research avenues for future generation sub-$6$GHz wireless access networks. Its main focus is the study of simple and scalable access-point (AP) cooperation architectures in ultra-dense networks, able to offer uniformly good service in a cost-effective manner \cite{ngo2017cell,bjornson2020scalable,buzzi2020,demir2021}. 

To achieve this ambitious goal, cell-free massive MIMO has evolved from related standardized technologies, such as 4G and 5G coordinated multi-point (CoMP), by introducing flexible cooperation regimes with lower information exchange and joint processing requirements. A first innovation is the adoption of \emph{user-centric} designs, where multiple network components are grouped into possibly overlapping cooperation clusters, each of them allocated to a single user equipment (UE), without relying on the notion of a cell \cite{bjornson2020scalable,buzzi2020}.  A second crucial aspect is the optimal distribution of signal processing tasks between remote control units and APs such that information exchange is minimized. For instance, in the original cell-free massive MIMO idea \cite{ngo2017cell}, all short-term array processing tasks, such as channel estimation and beamforming, are performed by the APs on the basis of only \emph{local} channel measurements. Then, joint encoding/decoding and resource allocation tasks such as power control are performed at a remote processing unit by only assuming knowledge of long-term statistical information. Many variations of this cooperation regime are currently subject of active research \cite{bjornson2019making,attarifar2020subset,shaik2020mmse}. 

Against this background, this study considers the important open problem of joint optimal long-term power control and distributed beamforming design under imperfect channel state information (CSI) sharing across the APs. To relate the present study with existing approaches, we recall that iterative algorithms for long-term power control under fixed beamforming design have been extensively studied in both the cellular and cell-free massive MIMO literature (see, e.g., \cite{ngo2017cell,demir2021,marzetta2016fundamentals, massivemimobook}). In contrast, the problem of optimal distributed beamforming design without power control has been solved only recently in \cite{miretti2021team,miretti2021team2} using elements of \emph{team} decision theory, a rigorous control theoretical framework for decentralized decision making under asymmetry of information. 
However, none of these previous studies have jointly optimized beamformers and transmit powers. The closest approaches to the problem at hand date back to the studies on joint beamforming design and power control in \textit{static} channels (see, e.g., \cite{schubert2004solution,schubert2011interference}). 
Although conceptually similar, these methods only apply to centralized or purely statistical beamforming. Furthermore, the centralized case typically considers rather impractical short-term optimization, i.e., for each channel state realization. 

Unlike previous studies, we target a certain class of weighted max-min SINR problems where, for the first time, joint optimal long-term power control and distributed beamforming design is performed. 
By tuning the problem weights, the corresponding solutions attain all positive boundary points  of the achievable rate region predicted by the popular \emph{use-and-then-forget} (UatF) ergodic capacity inner bound \cite{marzetta2016fundamentals}, which is particularly suitable for distributed cell-free operations~\cite{demir2021}. We propose two fast algorithms: for the first algorithm, global optimality is theoretically demonstrated; for the second algorithm, convergence to the global optimum is illustrated via simulations. On top of handling distributed cell-free operations, the proposed algorithms may also offer a more practical alternative to the available approaches for centralized operations. 

\subsection{Notation and mathematical preliminaries}
Let $(\Omega,\Sigma,\mathbb{P})$ be a probability space. We denote by 
$\set{H}^K$ the space of $K$-dimensional random vectors $\Omega \to \stdset{C}^K$, i.e., $K$-tuples of $\Sigma$-measurable functions $\Omega \to \stdset{C}$, satisfying $(\forall \rvec{h}\in \set{H}^K)$ $\E[\|\rvec{h}\|_2^2]<\infty$. 
We denote by $\stdset{R}_+$ and $\stdset{R}_{++}$ the sets of, respectively, nonnegative and positive reals. The $k$th column of the $K$-dimensional identity matrix $\vec{I}_K$ is denoted by $\vec{e}_k$. Inequalities involving vectors should be understood coordinate-wise. A norm $\|\cdot\|$ on $\stdset{R}^K$ is said to be \emph{monotone}  if $(\forall\vec{x}\in\stdset{R}_+^K)(\forall\vec{y}\in\stdset{R}_+^K)~\vec{x}\le\vec{y}\Rightarrow \|\vec{x}\|\le\|\vec{y}\|$.

\begin{definition}\label{def:pareto}
Given a nonempty set $\set{R}\subset \stdset{R}_+^K$, we say that $\vec{r}^\star \in \set{R}$ is (weakly) Pareto optimal iff $\nexists \vec{r}\in \set{R}$ such that $\vec{r} > \vec{r}^\star$. The set $\partial \set{R}$ of all (weakly) Pareto optimal points is called the (weak) Pareto boundary of $\set{R}$ \cite{emil2013optimal}.
\end{definition}

\begin{definition}\label{def:interf_fun}
A function $T:\stdset{R}^K_{+}\to\stdset{R}^K_{++}$ is called a \emph{standard interference mapping} if the following properties hold \cite{yates95}: 
\begin{enumerate}[(i)]
	\item $(\forall \vec{x}\in\stdset{R}^K_{+})(\forall \vec{y}\in\stdset{R}^K_{+})~ \vec{x}\geq\vec{y} \implies T(\vec{x})\geq T(\vec{y})$;
	\item $(\forall \vec{x}\in\stdset{R}^K_{+})$ $(\forall \alpha>1)$  $\alpha T(\vec{x})>T(\alpha\vec{x})$.
\end{enumerate}
For later reference, given a standard interference mapping, we call each coordinate function of the mapping a \emph{standard interference function}. For simplicity, we also require continuity for a function to be called a standard interference mapping. 
\end{definition}

\section{Model and problem statement}
\subsection{Capacity of distributed cell-free uplink networks}
Let us consider the uplink of a cell-free network composed of $L$ APs indexed by $\mathcal{L}:=\{1,\ldots,L\}$, each of them equipped with $N$ antennas, and $K$ single-antenna UEs indexed by $\mathcal{K}:=\{1,\ldots,K\}$.  By assuming conventional flat-fading synchronous channel models \cite{tse2005fundamentals}, we focus on the ergodic capacity in the classical Shannon sense, which is approximated with the popular \emph{use-and-then-forget} (UatF) inner bound \cite{marzetta2016fundamentals}. In more detail, we define $(\forall k \in \set{K})$
\begin{equation*}
R_k(\rvec{v}_k,\vec{p}) \eqdef \log_2(1+\mathrm{SINR}_k(\rvec{v}_k,\vec{p})),
\end{equation*}
\begin{equation*}
\mathrm{SINR}_k(\rvec{v}_k,\vec{p}) \eqdef \resizebox{0.69\linewidth}{!}{$\dfrac{p_k|\E[\rvec{h}_k^\herm\rvec{v}_k]|^2}{p_k\Var(\rvec{h}_k^\herm\rvec{v}_k)+\underset{j\neq k}{\sum} p_j\E[|\rvec{h}_j^\herm\rvec{v}_k|^2]+\E[\|\rvec{v}_k\|_2^2]}$},
\end{equation*}
where $\vec{p} \eqdef [p_1,\ldots,p_K]^\T \in \stdset{R}_{+}^K$ is a deterministic vector of transmit powers, $\rvec{h}_k \in\set{H}^{NL}$ is a random channel vector modeling the fading state between UE $k$ and all APs, and $\rvec{v}_k \in \set{H}^{NL}$ is a random beamforming vector, which is applied by the APs to process jointly the signal of UE~$k$. We assume that $\rvec{v}_k\neq\vec{0}$ almost surely (a.s.), so  $\E[\|\rvec{v}_k\|_2^2]\neq 0$ and the SINR is well defined. We stress that \emph{random} beamforming here simply refers to the fact that beamformers may adapt to fading realizations. 
The achievable rate region is given by
\begin{equation*}
\set{R} \eqdef \underset{\substack{\|\vec{p}\|_{\infty}\leq P \\ (\forall k \in \set{K})~\rvec{v}_k \in \set{V}_k \\(\forall k \in \set{K})~\E[\|\rvec{v}_k\|_2^2] \neq 0}}{\bigcup} \left\{ \vec{r}\in \stdset{R}_+^K \middle| \begin{array}{c}
r_1 \leq R_1(\rvec{v}_1,\vec{p}) \\
\vdots \\
r_K \leq R_K(\rvec{v}_K,\vec{p})
\end{array} \right\},
\end{equation*} 
where $0<P < \infty$ denotes the per-UE power budget, and where the set $\set{V}_k\subseteq \set{H}^{NL}$ denotes an \emph{information} constraint, defined in the following section, modelling \emph{ distributed} cell-free operations such as in \cite{bjornson2020scalable}. The region $\set{R}$ is especially relevant if codewords can span many realizations of a stationary and ergodic fading process, and it relies on the following widely-used information theoretical simplifications \cite{marzetta2016fundamentals,caire2018ergodic}:
\begin{enumerate}[(i)]
\item Correlation across different fading realizations is neglected in the encoding and decoding phase, although it can be exploited in the channel estimation phase (the overhead of this operation is left unspecified here);
\item CSI is neglected in the decoding phase, but it is exploited for coherent receive beamforming, i.e., to generate the soft estimates feeded as an input to the channel decoder;
\item \emph{Treating-interference-as-noise} (TIN) is considered in the decoding phase. More precisely, the message of UE~$k$ is transmitted using simple coding techniques for the memoryless point-to-point Gaussian channel, while interference is mitigated by only using spatial processing.
\end{enumerate}

\begin{remark}
	\label{remark.practical_aspects} 
Alternative rate regions based on TIN with various forms of CSI at the decoder are often considered (see  \cite{massivemimobook,caire2018ergodic,gottsch2022subspace} and references therein). These regions may provide  less pessimistic capacity approximations, but optimization problems  become challenging because of the expectation outside the logarithm in the resulting rate expressions. To deal with this issue, most studies perform short-term  optimization, i.e., an optimization problem is solved for each fading realization. However, 
this approach is hardly practical in cell-free networks because of the 
overhead required to exchange instantaneous CSI among many geographically distant network elements. In contrast, the proposed algorithms based on $\set{R}$ are readily implementable in distributed cell-free networks such as the ones envisioned in \cite{demir2021}. In these networks, short-term array processing is confined to the APs, while joint decoding and  optimization (e.g., power control) takes place at a remote processing unit based on long-term statistical information.
\end{remark}

\subsection{Information constraints}
In distributed cell-free operation, APs joint processing capabilities are impaired. One of the major impairment is that the APs must compute their portion of the beamformers on the basis of imperfect CSI sharing, or even based on only \emph{local} CSI \cite{ngo2017cell}. However, to date, most studies on cell-free massive MIMO do not formally model this type of impairment, so, strictly speaking, they  do not perform truly distributed beamforming optimization. To address this limitation, as in \cite{miretti2021team}, we model constraints related to imperfect CSI sharing with the following set:
\begin{equation}\label{eq:distributedCSI}
\set{V}_k = \set{H}_1^N \times \ldots \times \set{H}_L^N,
\end{equation}
where $\set{H}_l^N\subseteq \set{H}^N$ denotes the set of $N$-tuples of $\Sigma_l$-measurable functions $\Omega \to \stdset{C}$ satisfying $(\forall \rvec{h}\in \set{H}_l^N)$ $\E[\|\rvec{h}\|_2^2]<\infty$, and where $\Sigma_l \subseteq \Sigma$ is the sub-$\sigma$-algebra induced by the available CSI at the $l$th AP, also called the \emph{information subfield} of AP $l$ \cite{yukselbook}. Furthermore, another practical limitation is that only a subset of APs are used for decoding the message of a given UE. More specifically, scalable cell-free networks are expected to implement some type of user-centric rule \cite{buzzi2020}, which allocates to each UE a cluster of serving APs. As shown in \cite{miretti2021team2}, this practical limitation can be modeled by setting $\set{H}_l^N$ in \eqref{eq:distributedCSI} to the set $\{\vec{0}_N~ (\mathrm{a.s.})\}$ if AP $l$ is not serving UE $k$. Considering both impairments, we remark that $\set{V}_k$ is a linear \emph{subspace} of $\set{H}^{NL}$.  

\subsection{Weighted max-min SINR optimization}
The objective of the proposed algorithms is to optimize jointly the distributed beamformers and the transmit powers, with the intent to obtain rate tuples on the boundary $\partial\set{R}$ of $\set{R}$ (see Definition~\ref{def:pareto}). More precisely, under consideration is the following optimization problem:
\begin{equation}\label{eq:maxmin}
\begin{aligned}
\underset{\substack{\vec{p}\in \stdset{R}_{++}^K \\ \rvec{v}_1,\ldots,\rvec{v}_K \in \set{H}^{NL} }}{\text{maximize}}
& \quad  \min_{k\in\set{K}} \omega_k^{-1} \mathrm{SINR}_k(\rvec{v}_k,\vec{p}) \\
\text{subject to} & \quad \|\vec{p}\|_{\infty}\leq P \\
 & \quad  (\forall k \in \set{K})~\rvec{v}_k \in \set{V}_k\\
 & \quad  (\forall k \in \set{K})~\E[\|\rvec{v}_k\|_2^2] \neq 0,
\end{aligned}
\end{equation}
where $(\omega_1,\ldots,\omega_K)\in \stdset{R}_{++}^K$ is a given vector of positive weights, which can be tuned to focus on different operational points in $\partial\set{R}$. In particular, the popular \emph{max-min fair} point is obtained by setting $\vec{\omega} = \vec{1}$.
More generally, for different choices of weights, we have the next result:
\begin{proposition}\label{prop:pareto}
If a solution $(\vec{p}^\star,\rvec{v}_1^\star,\ldots,\rvec{v}_K^\star)$ to Problem~\eqref{eq:maxmin} exists, then $\vec{r}^\star \eqdef (R_1(\rvec{v}_1^\star,\vec{p}^\star),\ldots,R_K(\rvec{v}_K^\star,\vec{p}^\star))$ belongs to $\partial \set{R}$. Furthermore, every positive point in $\partial \set{R}$ can be obtained for some $\vec{\omega}\in \stdset{R}_{++}^K$.
\end{proposition}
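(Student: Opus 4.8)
The plan is to establish the two claims separately, both by standard contradiction/monotonicity arguments adapted to the SINR setting.

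First, for the claim that an optimizer yields a point in $\partial\set{R}$: suppose $\vec{r}^\star \notin \partial\set{R}$. Then by Definition~\ref{def:pareto} there exists $\vec{r}\in\set{R}$ with $\vec{r} > \vec{r}^\star$ coordinate-wise, and by the definition of $\set{R}$ this $\vec{r}$ is witnessed by some feasible $(\vec{p},\rvec{v}_1,\ldots,\rvec{v}_K)$ with $r_k \leq R_k(\rvec{v}_k,\vec{p})$ for all $k$. Since $\log_2(1+\cdot)$ is strictly increasing, $r_k > r_k^\star$ translates into $\mathrm{SINR}_k(\rvec{v}_k,\vec{p}) > \mathrm{SINR}_k(\rvec{v}_k^\star,\vec{p}^\star)$ for every $k$, hence $\omega_k^{-1}\mathrm{SINR}_k(\rvec{v}_k,\vec{p}) > \omega_k^{-1}\mathrm{SINR}_k(\rvec{v}_k^\star,\vec{p}^\star)$ for every $k$, so the minimum over $k\in\set{K}$ strictly increases. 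A mild technical point to dispatch here: the feasible set of Problem~\eqref{eq:maxmin} requires $\vec{p}\in\stdset{R}_{++}^K$ (strictly positive), whereas the union defining $\set{R}$ only requires $\|\vec{p}\|_\infty\leq P$; however, the Pareto-superior point $\vec{r}$ is positive (it dominates nothing worse than needed and, in any case, one may restrict attention to positive $\vec{r}$ since $\vec{r}^\star$ is positive), so the witnessing power vector can be taken strictly positive, and $\rvec{v}_k\neq\vec0$ a.s.\ is inherited from the $\E[\|\rvec{v}_k\|_2^2]\neq0$ constraint. This contradicts optimality of $(\vec{p}^\star,\rvec{v}_1^\star,\ldots,\rvec{v}_K^\star)$, proving $\vec{r}^\star\in\partial\set{R}$.

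Second, for the converse claim that every positive $\vec{r}^\star\in\partial\set{R}$ is attained by Problem~\eqref{eq:maxmin} for some $\vec{\omega}\in\stdset{R}_{++}^K$: given such an $\vec{r}^\star$, pick a feasible tuple $(\vec{p}^\star,\rvec{v}_1^\star,\ldots,\rvec{v}_K^\star)$ achieving it, set $\gamma_k^\star \eqdef \mathrm{SINR}_k(\rvec{v}_k^\star,\vec{p}^\star) = 2^{r_k^\star}-1 > 0$, and choose $\vec{\omega} = \vec{\gamma}^\star$, i.e., $\omega_k = \gamma_k^\star$. With this choice, the tuple achieves objective value $\min_k \omega_k^{-1}\gamma_k^\star = 1$ in Problem~\eqref{eq:maxmin}. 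It remains to show no feasible tuple does strictly better, i.e., achieves $\min_k \omega_k^{-1}\mathrm{SINR}_k > 1$. If some feasible $(\vec{p},\rvec{v}_1,\ldots,\rvec{v}_K)$ did, then $\mathrm{SINR}_k(\rvec{v}_k,\vec{p}) > \gamma_k^\star$ for all $k$, hence $R_k(\rvec{v}_k,\vec{p}) > r_k^\star$ for all $k$, producing a point of $\set{R}$ strictly dominating $\vec{r}^\star$ — contradicting $\vec{r}^\star\in\partial\set{R}$. Hence the value $1$ is optimal and $(\vec{p}^\star,\rvec{v}_1^\star,\ldots,\rvec{v}_K^\star)$ is a solution, whose associated rate tuple is exactly $\vec{r}^\star$.

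The argument is essentially a transfer of the classical scalarization-by-weights lemma for Pareto boundaries to the present infinite-dimensional beamforming setting, and no step is genuinely hard. The only point requiring a little care is the interplay between the strictly-positive power constraint in~\eqref{eq:maxmin} and the closed constraint $\|\vec{p}\|_\infty\leq P$ in the definition of $\set{R}$, together with the a.s.\ nonzero beamformer constraints; one must check that the witnessing tuples in the contradiction steps can always be taken inside the (smaller) feasible set of~\eqref{eq:maxmin}, which holds because all points under consideration are positive and hence achieved with strictly positive powers and nonzero beamformers. Continuity and the monotonicity of $\log_2(1+\cdot)$ do all the remaining work.
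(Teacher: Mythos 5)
Your argument is correct and is essentially the standard scalarization/monotonicity argument that the paper's (omitted) proof alludes to, namely the classical weak-Pareto characterization of weighted max-min optima combined with the strict monotonicity of $\log_2(1+\cdot)$; your handling of the $\vec{p}\in\stdset{R}_{++}^K$ versus $\|\vec{p}\|_\infty\le P$ mismatch is also right, since positivity of the rates forces positivity of the witnessing powers. One small imprecision in the second part: because $\set{R}$ is a union of down-sets, a witness of $\vec{r}^\star\in\set{R}$ only guarantees $r_k^\star\le R_k(\rvec{v}_k^\star,\vec{p}^\star)$, so the identification $\mathrm{SINR}_k(\rvec{v}_k^\star,\vec{p}^\star)=2^{r_k^\star}-1$ need not hold with equality; your argument still shows that the witness solves Problem~\eqref{eq:maxmin} with $\omega_k=\mathrm{SINR}_k(\rvec{v}_k^\star,\vec{p}^\star)$, but its rate tuple is then only guaranteed to dominate $\vec{r}^\star$ coordinate-wise, which is the level of precision at which the proposition's ``can be obtained'' is customarily read for weak Pareto boundaries.
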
  
\begin{proof} 
(Sketch) The proof follows from standard properties of performance regions in multi-criteria resource allocation  problems (see, e.g, \cite{emil2013resource}), and the logarithmic relation between rate and SINR. The details are omitted. 
\end{proof}

\begin{remark}
A popular alternative to weighted max-min optimization is \emph{weighted sum-rate} (WSR) optimization (see, e.g., \cite{demir2021}). One major difference between these two approaches is that optimal WSR solutions always produce points on the \emph{strong} Pareto boundary of $\set{R}$ (see \cite{emil2013resource} for a definition). However, unlike weighted max-min solutions, if $\set{R}$ is nonconvex, WSR solutions may miss some (weak or strong) Pareto optimal points that are important for a given criterion of fairness \cite{das1997closer}, \cite[Remark 1.3]{emil2013resource}. Nonconvex rate regions are common, for instance, when the network enters an interference limited regime. User fairness can be restored through \emph{scheduling}, but resource allocation becomes more complex \cite{emil2013resource}.
\end{remark}  

\section{Optimal iterative solution}
In this section we present a fast iterative algorithm that provably converges to a solution to Problem~\eqref{eq:maxmin}, provided that a solution exists and that it attains a positive optimum. As we prove later, existence and positivity of a solution is guaranteed by a simple condition on the fading and CSI distribution. Specifically, it is guaranteed if $(\forall k \in \set{K})~\set{V}_k':=\{\rvec{v}_k \in \set{V}_k~|~\E[\rvec{h}_k^\herm \rvec{v}_k]\neq 0\}$ is nonempty, which holds for most fading and CSI distributions.
\begin{remark}
In particular, $\set{V}_k'\neq \emptyset$ does not hold if the APs have no CSI and, in addition, the channels have zero mean. In this case, $(\forall \rvec{v}_k \in \set{V}_k)~\E[\rvec{h}_k^\herm\rvec{v}_k] = \E[\rvec{h}_k]^\herm\E[\rvec{v}_k] = 0$ holds. This is one of the main drawbacks of the UatF bounding technique, which requires either sufficient CSI \emph{or} a strong channel mean to produce non-trivial capacity inner bounds.
\end{remark}

\subsection{Power control using interference calculus}
As a preliminary step, we first revisit known results for fairly general power control problems of the type 
\begin{equation}\label{eq:power_control}
\underset{\|\vec{p}\|\leq P}{\text{maximize}} \min_{k\in\set{K}} \omega_k^{-1} u_k(\vec{p}),
\end{equation}
where $\|\cdot\|$ is a monotone norm on $\stdset{R}^K$, and where $u_k:\stdset{R}_+^K\to \stdset{R}_+$ denotes the utility of UE $k$ satisfying
\begin{equation}\label{eq:interf}
(\forall\vec{p}\in\stdset{R}_+^K)~u_k(\vec{p}) = \dfrac{p_k}{f_k(\vec{p})}
\end{equation}
for some standard interference function $f_k:\stdset{R}_+^K\to\stdset{R}_{++}$ (see Definition \ref{def:interf_fun}). This structure covers many common power control problems in wireless networks, including an equivalent formulation of Problem~\eqref{eq:maxmin}. An important consequence of the above structure is the following result:
\begin{proposition}\label{prop:fixed_point}
Define $T:\stdset{R}_+^K\to \stdset{R}_{++}^K:\vec{p}\mapsto [\omega_1f_1(\vec{p}), \ldots,\omega_Kf_K(\vec{p})]^\T$. Then each of the following holds:
\begin{enumerate}[(i)]
\item The set $\mathrm{Fix}(\tilde{T})$ of fixed points of the mapping 
\begin{equation*}
\tilde{T}:  \stdset{R}_{+}^K \to \stdset{R}_{++}^K  : \vec{p} \mapsto \frac{P}{\|T(\vec{p})\|}T(\vec{p})
\end{equation*}
is a singleton, and  its unique fixed point $\vec{p}^\star\in\mathrm{Fix}(\tilde{T})$  is a solution to Problem~\eqref{eq:power_control}.
\item For every $\vec{p}_1 \in \stdset{R}_{+}^K$, the sequence $(\vec{p}_i)_{i\in\stdset{N}}$ generated via 
\begin{equation}
	\label{eq:italg}
(\forall i \in \stdset{N})~\vec{p}_{i+1} = \tilde{T}(\vec{p}_i)
\end{equation}
converges in norm to the fixed point $\vec{p}^\star\in\mathrm{Fix}(\tilde{T})$.
\item Denote by $\set{P}\subset\stdset{R}_{++}^K$ the nonempty set of solutions to Problem~\eqref{eq:power_control}. Then the fixed point $\vec{p}^\star\in\mathrm{Fix}(\tilde{T})$ satisfies $(\forall \vec{p}\in \set{P})~\vec{p}\ge \vec{p}^\star$; i.e., $\vec{p}^\star$ is the least element of $\set{P}$ w.r.t. the partial ordering induced by the nonnegative cone.
\end{enumerate}
\end{proposition}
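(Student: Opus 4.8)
The plan is to recognize Proposition~\ref{prop:fixed_point} as an instance of the classical theory of \emph{conditional eigenvectors} of standard interference mappings, deriving everything from the two structural properties in Definition~\ref{def:interf_fun}. First I would observe that $T$ is itself a standard interference mapping: monotonicity and scalability hold coordinate-wise because multiplying a standard interference function by the fixed positive weight $\omega_k$ preserves both, and continuity is assumed. Hence $\tilde T$ is well defined (the factor $P/\|T(\vec p)\|$ is positive and continuous) and maps $\stdset{R}_+^K$ into the compact set $\set{S}\eqdef\{\vec p\ge\vec 0:\|\vec p\|=P\}$, in fact into $\set{S}\cap\stdset{R}_{++}^K$ since $T$ has range in $\stdset{R}_{++}^K$. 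The key elementary remark, used throughout, is that $\vec p=\tilde T(\vec p)$ rearranges into $T(\vec p)=\lambda\vec p$ with $\lambda\eqdef\|T(\vec p)\|/P>0$ and $\|\vec p\|=P$, and conversely; i.e.\ $\mathrm{Fix}(\tilde T)$ is exactly the set of norm-$P$ conditional eigenvectors of $T$.

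For part~(i) I would establish existence of such an eigenvector via Brouwer's fixed point theorem applied to the continuous self-map $\tilde T|_{\set S}$, after identifying $\set S$ with the standard simplex by radial projection (each ray from the origin meets $\set S$ exactly once); the resulting fixed point is automatically positive. Uniqueness is where scalability is essential: given two conditional eigenpairs $(\vec p,\lambda)$ and $(\vec q,\mu)$, the ratios $\max_k p_k/q_k$ and $\max_k q_k/p_k$ are both $\ge 1$ because $\|\vec p\|=\|\vec q\|=P$ and $\|\cdot\|$ is monotone; applying monotonicity and then the strict scalability inequality along each of these directions forces first $\lambda=\mu$ and then $\vec p=\vec q$ — the standard uniqueness argument for conditional eigenvectors. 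To get optimality of $\vec p^\star$ for Problem~\eqref{eq:power_control}, note that for any feasible $\vec p$ the objective equals $v(\vec p)=\min_k p_k/[T(\vec p)]_k$, so $T(\vec p)\le(1/v(\vec p))\vec p$; repeating the ratio comparison between $\vec p^\star$ and $\vec p$ (using $\|\vec p\|\le P=\|\vec p^\star\|$ to obtain $\max_k p^\star_k/p_k\ge 1$) yields $\lambda^\star\le 1/v(\vec p)$, where $\lambda^\star\eqdef\|T(\vec p^\star)\|/P$, hence $v(\vec p)\le 1/\lambda^\star=v(\vec p^\star)$. This settles $\mathrm{Fix}(\tilde T)=\{\vec p^\star\}$ and that $\vec p^\star$ solves \eqref{eq:power_control}.

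For part~(ii) I would exploit that, after one iteration, every $\vec p_i$ lies in $\set S\cap\stdset{R}_{++}^K$, on which the Hilbert projective metric $d(\vec x,\vec y)=\log\!\big((\max_k x_k/y_k)(\max_k y_k/x_k)\big)$ is a genuine metric inducing the usual topology. Monotonicity of $T$ makes $d$ non-expansive under $T$, and scale-invariance of $d$ transfers this to $\tilde T$; moreover, since $T(\vec p^\star)=\lambda^\star\vec p^\star$, scalability makes the estimate \emph{strict} whenever $\vec p$ is not a scalar multiple of $\vec p^\star$, i.e.\ (on $\set S$) whenever $\vec p\neq\vec p^\star$, so $d(\tilde T(\vec p),\vec p^\star)<d(\vec p,\vec p^\star)$. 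Thus $i\mapsto d(\vec p_i,\vec p^\star)$ is strictly decreasing; passing to a convergent subsequence (its range sits in a compact subset of $\stdset{R}_{++}^K$, being norm-bounded and $d$-bounded away from the boundary) and invoking continuity of $\tilde T$ and $d$, a strict-decrease contradiction forces the limit to be $0$, i.e.\ $\vec p_i\to\vec p^\star$ in norm. This is precisely the convergence result for the normalized fixed-point iteration of a standard interference mapping, cf.\ \cite{schubert2011interference,yates95}.

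For part~(iii), part~(i) already gives $\vec p^\star\in\set P$, so $\set P\neq\emptyset$. Take any $\vec p\in\set P$; since its objective value equals the optimum $1/\lambda^\star$ we have $T(\vec p)\le\lambda^\star\vec p$, i.e.\ $G(\vec p)\le\vec p$ for the standard interference mapping $G\eqdef(1/\lambda^\star)T$, whose unique fixed point is $\vec p^\star$. By monotonicity the sequence $(G^i(\vec p))_{i}$ is non-increasing and bounded below by $\vec 0$, hence convergent, and by continuity its limit is a fixed point of $G$, necessarily $\vec p^\star$; therefore $\vec p\ge G(\vec p)\ge\cdots\ge\vec p^\star$, the claimed least-element property. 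The routine ingredients of this plan are the structural verifications and the monotone iteration in (iii); the main obstacle is part~(ii), since the normalization $T\mapsto\tilde T$ breaks the monotonicity underlying Yates-type convergence, so the projective-metric (or an equivalent ``sandwich'') argument together with compactness of $\set S$ is genuinely required — with the Brouwer-type existence step in (i) a secondary subtlety.
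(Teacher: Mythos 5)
Your proof is correct, and it takes a genuinely more self-contained route than the paper, whose argument is only a sketch by citation: the paper establishes (i)--(ii) by invoking \cite[Theorem 3.2]{nuzman07} for existence, uniqueness and convergence of the normalized iteration and arguments from \cite[Sect.~III]{zheng2016} for optimality, and omits (iii) entirely. You instead reconstruct the conditional-eigenvector machinery from the two axioms in Definition~\ref{def:interf_fun}: Brouwer on the norm sphere for existence, the max-ratio/scalability comparison for uniqueness and optimality, a Hilbert-projective-metric ``strict decrease at the fixed point'' plus compactness argument for convergence, and the monotone iteration $G^i(\vec{p})\downarrow\vec{p}^\star$ with $G=(1/\lambda^\star)T$ for the least-element property. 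This buys a complete proof of the part the paper omits and makes explicit why possible non-uniqueness of the solution set $\set{P}$ (precisely the caveat the paper flags when borrowing from \cite{zheng2016}) causes no difficulty. Two points worth tightening. First, in the optimality step of (i), dispose of feasible $\vec{p}$ with a zero coordinate before forming the ratios $p^\star_k/p_k$: such points have objective value $0<1/\lambda^\star$, so they are never competitive and the ratio argument can be restricted to $\vec{p}\in\stdset{R}_{++}^K$. Second, the blanket statement that monotonicity makes $T$ non-expansive in Hilbert's metric is not accurate for merely subhomogeneous (rather than homogeneous) maps; what saves the argument is exactly your restriction to the sphere $\set{S}$, where monotonicity of the norm forces both $\max_k x_k/y_k\ge 1$ and $\max_k y_k/x_k\ge 1$, so scalability is applied only with factors $\ge 1$ and the estimate $d(T(\vec{x}),T(\vec{y}))\le d(\vec{x},\vec{y})$ (strict unless $\vec{x}=\vec{y}$ on $\set{S}$, when compared with the eigenvector $\vec{p}^\star$) goes through as you use it.
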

\begin{proof}
	(Sketch) (i) The proof that the mapping $\tilde{T}$ has a unique fixed point follows directly from \cite[Theorem 3.2]{nuzman07}. We can use arguments similar to those in \cite[Sect.~III]{zheng2016} to show that the fixed point of $\tilde{T}$ is a solution to Problem~\ref{eq:power_control}. However, some care has to be taken in following those arguments because, unlike the results in \cite{zheng2016}, utilities with the structure in \eqref{eq:interf} do not necessarily satisfy all properties in \cite[Assumption~1]{zheng2016}, so, in particular, the solution to Problem~\eqref{eq:power_control} is not necessarily unique. We omit the full proof because of the space limitation.
	
	(ii) Use the same arguments shown in the proof of \cite[Theorem 3.2]{nuzman07}.
	
	(iii) Omitted because of the space limitation.
\end{proof}

\begin{remark}
From a practical perspective, Proposition~\ref{prop:fixed_point} shows that Problem~\eqref{eq:power_control} always has a solution. Furthermore, if the solution is not unique, then the recursion in \eqref{eq:italg} converges in norm to the solution with minimum sum transmit power.
\end{remark}


\subsection{Equivalent problem reformulation}
We now map Problem~\eqref{eq:maxmin} to an equivalent power control problem of the type in \eqref{eq:power_control}, where the optimization of the beamformers is implicit in the definition of the utilities $u_k$. More precisely, by exploiting the fact that the SINRs are only coupled through the power vector $\vec{p}$, we obtain:
\begin{proposition}\label{prop:equivalence}
For each UE $k\in \set{K}$, define the utility
\begin{equation}\label{eq:maxSINR}
(\forall\vec{p}\in\stdset{R}_+^K)~u_k(\vec{p}) \eqdef \sup_{\substack{\rvec{v}_k\in\set{V}_k\\ \E[\|\rvec{v}_k\|_2^2] \neq 0}}\mathrm{SINR}_k(\rvec{v}_k,\vec{p}),
\end{equation}
and assume that $(\forall k~\in \set{K})~\set{V}'_k \eqdef \{\rvec{v}_k \in \set{V}_k~|~\E[\rvec{h}_k^\herm \rvec{v}_k]\neq 0\} \neq \emptyset$. Then, each of the following holds: 
\begin{enumerate}[(i)]
\item The utility \eqref{eq:maxSINR} satisfies $(\forall k \in \set{K})(\forall\vec{p}\in\stdset{R}_{++}^K)$ 
\begin{equation*}
u_k(\vec{p}) = \sup_{\rvec{v}_k\in\set{V}'_k}\mathrm{SINR}_k(\rvec{v}_k,\vec{p}) > 0;
\end{equation*}
\item The utility \eqref{eq:maxSINR} satisfies \eqref{eq:interf} with the standard interference function $(\forall k \in \set{K})(\forall\vec{p}\in\stdset{R}_+^K)~f_k(\vec{p}) \eqdef$
\begin{equation*}
\inf_{\rvec{v}_k \in \set{V}'_k} \dfrac{p_k\Var(\rvec{h}_k^\herm\rvec{v}_k)+\sum_{j\neq k}p_j\E[|\rvec{h}_j^\herm\rvec{v}_k|^2]+\E[\|\rvec{v}_k\|_2^2]}{|\E[\rvec{h}_k^\herm\rvec{v}_k]|^2};
\end{equation*}
\item Let $\vec{p}^\star$ be a solution to Problem~\eqref{eq:power_control} with utilities in \eqref{eq:maxSINR} and monotone norm $\|\cdot\| = \|\cdot\|_{\infty}$. If $(\forall k~\in \set{K})$ $\exists \rvec{v}^\star_k\in \set{V}'_k$ such that $u_k(\vec{p}^\star) = \mathrm{SINR}_k(\rvec{v}^\star_k,\vec{p}^\star)$ (i.e., the $\sup$ is attained), then $(\vec{p}^\star,\rvec{v}_1^\star,\ldots,\rvec{v}_K^\star)$ solves Problem~\eqref{eq:maxmin}.
\end{enumerate} 
\end{proposition}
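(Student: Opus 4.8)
The plan is to dispatch the three items in order; (i) and (iii) are short, and the bulk of the work is in (ii).

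For (i), the key observation is that the numerator of $\mathrm{SINR}_k(\rvec{v}_k,\vec{p})$ vanishes exactly when $\E[\rvec{h}_k^\herm\rvec{v}_k]=0$. Hence any $\rvec{v}_k\in\set{V}_k$ with $\E[\|\rvec{v}_k\|_2^2]\neq 0$ but $\E[\rvec{h}_k^\herm\rvec{v}_k]=0$ gives $\mathrm{SINR}_k(\rvec{v}_k,\vec{p})=0$ and does not affect the supremum in \eqref{eq:maxSINR}; conversely, for $\vec{p}\in\stdset{R}_{++}^K$ and any $\rvec{v}_k\in\set{V}'_k$ the numerator $p_k|\E[\rvec{h}_k^\herm\rvec{v}_k]|^2$ is strictly positive while the denominator is finite and positive (it contains the term $\E[\|\rvec{v}_k\|_2^2]>0$), so $\mathrm{SINR}_k(\rvec{v}_k,\vec{p})>0$. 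Since $\set{V}'_k\neq\emptyset$ by assumption and $\set{V}'_k\subseteq\{\rvec{v}_k\in\set{V}_k\mid\E[\|\rvec{v}_k\|_2^2]\neq 0\}$ (a zero-energy beamformer equals $\vec{0}$ a.s., forcing $\E[\rvec{h}_k^\herm\rvec{v}_k]=0$), restricting the supremum to $\set{V}'_k$ leaves it unchanged and makes it strictly positive.

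For (ii), I would first rearrange: for $\rvec{v}_k\in\set{V}'_k$ write $\mathrm{SINR}_k(\rvec{v}_k,\vec{p})=p_k/g_k(\rvec{v}_k,\vec{p})$, where $g_k(\rvec{v}_k,\vec{p})\eqdef\big(p_k\Var(\rvec{h}_k^\herm\rvec{v}_k)+\sum_{j\neq k}p_j\E[|\rvec{h}_j^\herm\rvec{v}_k|^2]+\E[\|\rvec{v}_k\|_2^2]\big)/|\E[\rvec{h}_k^\herm\rvec{v}_k]|^2$ is the ratio in the statement. Taking the supremum over $\rvec{v}_k\in\set{V}'_k$ and using that $1/(\cdot)$ reverses order gives $u_k(\vec{p})=p_k/f_k(\vec{p})$ with $f_k(\vec{p})=\inf_{\rvec{v}_k\in\set{V}'_k}g_k(\rvec{v}_k,\vec{p})$, the case $p_k=0$ being handled separately (both sides are $0$ once $f_k(\vec{p})$ is known to be finite and positive). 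It then remains to verify that $f_k$ is a standard interference function. The crucial estimate, from Cauchy--Schwarz, is $|\E[\rvec{h}_k^\herm\rvec{v}_k]|^2\le\E[\|\rvec{h}_k\|_2^2]\,\E[\|\rvec{v}_k\|_2^2]$, giving the uniform bound $g_k(\rvec{v}_k,\vec{p})\ge\E[\|\rvec{v}_k\|_2^2]/|\E[\rvec{h}_k^\herm\rvec{v}_k]|^2\ge 1/\E[\|\rvec{h}_k\|_2^2]>0$, where $\E[\|\rvec{h}_k\|_2^2]\in(0,\infty)$ (finite since $\rvec{h}_k\in\set{H}^{NL}$; positive since otherwise $\set{V}'_k=\emptyset$); hence $f_k(\vec{p})>0$. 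Finiteness holds because $\set{V}'_k$ contains beamformers with finite second-order statistics — truncating any element of $\set{V}'_k$ componentwise per AP preserves the information constraint and, by dominated convergence, stays in $\set{V}'_k$ for a large enough truncation level — so $g_k(\rvec{v}_k,\vec{p})<\infty$ for at least one $\rvec{v}_k$. Monotonicity of $f_k$ follows since each $g_k(\rvec{v}_k,\cdot)$ is affine in $\vec{p}$ with nonnegative coefficients ($\Var(\rvec{h}_k^\herm\rvec{v}_k)\ge 0$, $\E[|\rvec{h}_j^\herm\rvec{v}_k|^2]\ge 0$), so it is an infimum of nondecreasing maps. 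For scalability the affine structure yields the exact identity $g_k(\rvec{v}_k,\alpha\vec{p})=\alpha g_k(\rvec{v}_k,\vec{p})-(\alpha-1)\E[\|\rvec{v}_k\|_2^2]/|\E[\rvec{h}_k^\herm\rvec{v}_k]|^2$; combining with the uniform lower bound and taking the infimum over $\rvec{v}_k$ gives $f_k(\alpha\vec{p})\le\alpha f_k(\vec{p})-(\alpha-1)/\E[\|\rvec{h}_k\|_2^2]<\alpha f_k(\vec{p})$. Continuity follows from $f_k$ being an infimum of affine (hence continuous) functions, thus upper semicontinuous, together with its concavity and monotonicity, which preclude downward jumps; at the boundary of $\stdset{R}_+^K$ one squeezes using these two properties.

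For (iii), I would compare optimal values. On one hand, every feasible $(\vec{p},\rvec{v}_1,\ldots,\rvec{v}_K)$ of Problem~\eqref{eq:maxmin} satisfies $\mathrm{SINR}_k(\rvec{v}_k,\vec{p})\le u_k(\vec{p})$ by definition of $u_k$, so the optimal value of \eqref{eq:maxmin} is at most that of Problem~\eqref{eq:power_control} with utilities \eqref{eq:maxSINR} and $\|\cdot\|=\|\cdot\|_\infty$. On the other hand, the candidate $(\vec{p}^\star,\rvec{v}_1^\star,\ldots,\rvec{v}_K^\star)$ is feasible for \eqref{eq:maxmin}: $\vec{p}^\star\in\stdset{R}_{++}^K$ by Proposition~\ref{prop:fixed_point}(iii), $\|\vec{p}^\star\|_\infty\le P$, and $\rvec{v}_k^\star\in\set{V}'_k\subseteq\set{V}_k$ with $\E[\|\rvec{v}_k^\star\|_2^2]\neq 0$. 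By the attainment hypothesis and optimality of $\vec{p}^\star$, this candidate attains objective $\min_k\omega_k^{-1}\mathrm{SINR}_k(\rvec{v}_k^\star,\vec{p}^\star)=\min_k\omega_k^{-1}u_k(\vec{p}^\star)$, which is the optimal value of \eqref{eq:power_control}; hence it meets the upper bound and solves \eqref{eq:maxmin}. The main obstacle is item (ii) — showing that $f_k$, defined via an infimum over an infinite-dimensional space of random beamformers, is genuinely a standard interference function; the nontrivial points are strict scalability and strict positivity/finiteness, both resting on the uniform Cauchy--Schwarz bound $g_k\ge 1/\E[\|\rvec{h}_k\|_2^2]$, with the boundary continuity being a minor technicality. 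Items (i) and (iii) are routine once (ii) is established.
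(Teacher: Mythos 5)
Your proof is correct, and for the substantive item (ii) it takes a genuinely different route from the paper. The paper's proof observes, as you do, that $f_k$ is a pointwise infimum of affine functions of $\vec{p}$, hence concave, and then simply asserts the lower bound $f_k(\vec{p})\ge\inf_{\rvec{v}_k\in\set{V}'_k}\E[\|\rvec{v}_k\|_2^2]/|\E[\rvec{h}_k^\herm\rvec{v}_k]|^2>0$ and invokes an external result stating that every positive concave function on $\stdset{R}_+^K$ is a standard interference function. You instead verify the Yates axioms directly: monotonicity from the nonnegativity of the affine coefficients, and strict scalability from the exact identity $g_k(\rvec{v}_k,\alpha\vec{p})=\alpha g_k(\rvec{v}_k,\vec{p})-(\alpha-1)\E[\|\rvec{v}_k\|_2^2]/|\E[\rvec{h}_k^\herm\rvec{v}_k]|^2$ combined with a uniform lower bound on the subtracted term. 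This is more elementary and self-contained, and it yields a quantitative scalability gap. You also supply two details the paper leaves implicit: the Cauchy--Schwarz bound $|\E[\rvec{h}_k^\herm\rvec{v}_k]|^2\le\E[\|\rvec{h}_k\|_2^2]\,\E[\|\rvec{v}_k\|_2^2]$, which is exactly what justifies the paper's unproved positivity claim, and the finiteness of $f_k$ (needed for $T$ to map into $\stdset{R}_{++}^K$), which you handle via a truncation argument that respects the information constraints. The only soft spot is continuity of $f_k$ at the boundary of $\stdset{R}_+^K$, which you gesture at rather than prove; the paper is no more careful on this point, as its cited result absorbs the issue, but if you want a clean statement you can note that a finite concave function on the polyhedral set $\stdset{R}_+^K$ is lower semicontinuous there, which combined with the upper semicontinuity of an infimum of continuous functions gives continuity. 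Items (i) and (iii) match the paper's (largely omitted) arguments.
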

\begin{proof}
\begin{enumerate}[(i)]
\item Since $\set{V}'_k  \subset \{\rvec{v}_k \in \set{V}_k~|~\E[\|\rvec{v}_k\|_2^2] \neq 0\}$, we have $(\forall\vec{p}\in\stdset{R}_+^K)~u_k(\vec{p}) \geq \sup_{\rvec{v}_k\in\set{V}'_k}\mathrm{SINR}_k(\rvec{v}_k,\vec{p})$. By definition of $\set{V}'_k$, $\sup_{\rvec{v}_k\in\set{V}'_k}\mathrm{SINR}_k(\rvec{v}_k,\vec{p}) > 0$ for $p_k>0$. This also shows that beamformers such that $\E[\rvec{h}_k^\herm \rvec{v}_k]= 0$ can be omitted when evaluating \eqref{eq:maxSINR}.
\item To confirm that $f_k$ is indeed a standard interference function, recall that $f_k$ is concave because it is constructed by taking the point-wise infimum of affine (and hence concave) functions. Furthermore, by definition of $\mathcal{V}_k'$, which also implies $(\forall \rvec{v}_k\in \mathcal{V}_k')~\E[\|\rvec{v}_k\|_2^2] \neq 0$, 
\begin{align*}
(\forall\vec{p}\in\stdset{R}_+^K)~f_k(\vec{p})\geq \inf_{\rvec{v}_k \in \set{V}'_k}\dfrac{\E[\|\rvec{v}_k\|_2^2]}{|\E[\rvec{h}_k^\herm\rvec{v}_k]|^2} >0.
\end{align*} 
Therefore, we conclude that $f_k$ is not only concave but also positive, and hence it is a standard interference function \cite[Proposition~1]{renato2016}.
\item Immediate, hence omitted. 
\end{enumerate}
\end{proof}
The main implication of Proposition~\ref{prop:equivalence} is that an optimal power vector $\vec{p}^\star$ for Problem~\eqref{eq:maxmin} can be obtained via the fixed-point iterations in \eqref{eq:italg}, where each $i$th step involves the evaluation of the standard interference mapping $T(\vec{p})= [\frac{\omega_1p_1}{u_1(\vec{p})}, \ldots,\frac{\omega_Kp_K}{u_K(\vec{p})}]^\T$ for $\vec{p}=\vec{p}_i$, and where $u_k(\vec{p})$ is given by \eqref{eq:maxSINR}. The main challenge of this approach lies in the computation of $u_k(\vec{p})$, which requires solving a SINR maximization problem over the set of feasible distributed beamformers. The following section discusses how to perform this step, and it shows that the supremum in \eqref{eq:maxSINR} is always attained in practice.

\subsection{Team MMSE beamforming}
The following proposition states that optimal distributed beamformers under fixed power control, i.e., beamformers that attain the supremum in \eqref{eq:maxSINR} for a given $\vec{p}$, can be obtained by solving a minimum mean squared error (MMSE) problem under information constraints. For convenience, we define the random channel matrix $\rvec{H}\eqdef[\rvec{h}_1,\ldots,\rvec{h}_K]$.
\begin{proposition}
\label{prop:MSE}
For given $k\in\set{K}$ and $\vec{p}\in\stdset{R}_{++}^K$, consider the optimization problem 
\begin{equation}\label{eq:MSE}
\underset{\rvec{v}_k \in \set{V}_k}{\emph{minimize}}~\mathrm{MSE}_k(\rvec{v}_k,\vec{p}),
\end{equation}
where we define $(\forall \rvec{v}_k \in \set{H}^{NL})$
\begin{equation*}
\mathrm{MSE}_k(\rvec{v}_k,\vec{p})\eqdef \E\left[\|\mathrm{diag}(\vec{p})^{\frac{1}{2}}\rvec{H}^\herm\rvec{v}_k-\vec{e}_k\|_2^2\right] + \E\left[\|\rvec{v}_k\|_2^2\right].
\end{equation*}
Problem~\eqref{eq:MSE} has a unique solution $\rvec{v}_k^\star\in \set{V}_k$. Furthermore, if $\set{V}_k'\neq \emptyset$, then $u_k(\vec{p}) = \mathrm{SINR}_k\left(\rvec{v}_k^\star,\vec{p}\right)$ holds.
\end{proposition}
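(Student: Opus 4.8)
The plan is to establish existence and uniqueness of the team-MMSE beamformer by Hilbert-space convex analysis, and then to deduce the SINR identity from a scale-invariant reformulation of $\mathrm{SINR}_k$ combined with the MMSE optimality condition. Throughout I would work in the complex Hilbert space $\set{H}^{NL}$ with inner product $\langle\rvec{x},\rvec{y}\rangle\eqdef\E[\rvec{x}^\herm\rvec{y}]$, in which $\set{V}_k$ is a \emph{closed} subspace (each factor $\set{H}_l^N$ is the $L^2$-space of $\Sigma_l$-measurable $\stdset{C}^N$-valued maps, hence closed, being the range of the conditional-expectation projection). The map $\rvec{v}_k\mapsto\mathrm{MSE}_k(\rvec{v}_k,\vec{p})$ is proper (it equals $1$ at $\rvec{v}_k=\vec{0}$), convex, and lower semicontinuous, and, since $\E[\|\mathrm{diag}(\vec{p})^{1/2}\rvec{H}^\herm\rvec{v}_k\|_2^2]\ge 0$ while the cross term is bounded below, via Cauchy--Schwarz, by a multiple of $(\E[\|\rvec{v}_k\|_2^2])^{1/2}$, it is strongly convex and coercive in the Hilbert norm. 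Minimizing it over the nonempty closed convex set $\set{V}_k$ therefore yields a unique minimizer $\rvec{v}_k^\star$, characterized (by differentiating along directions $\rvec{w}\in\set{V}_k$) by the normal equation
\[ (\forall\rvec{w}\in\set{V}_k)\quad\E[\rvec{w}^\herm(\rvec{H}\mathrm{diag}(\vec{p})\rvec{H}^\herm\rvec{v}_k^\star+\rvec{v}_k^\star)]=\sqrt{p_k}\,\E[\rvec{w}^\herm\rvec{h}_k]. \]
Alternatively, this first part can be quoted from the team-MMSE results of~\cite{miretti2021team}.

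Next I would record the elementary but crucial identity: using $p_k\Var(\rvec{h}_k^\herm\rvec{v}_k)+p_k|\E[\rvec{h}_k^\herm\rvec{v}_k]|^2=p_k\E[|\rvec{h}_k^\herm\rvec{v}_k|^2]$ and $\sum_j p_j|\rvec{h}_j^\herm\rvec{v}_k|^2=\|\mathrm{diag}(\vec{p})^{1/2}\rvec{H}^\herm\rvec{v}_k\|_2^2$, one obtains, for every $\rvec{v}_k\in\set{V}_k'$,
\[ 1+\frac{1}{\mathrm{SINR}_k(\rvec{v}_k,\vec{p})}=\frac{\E[\|\mathrm{diag}(\vec{p})^{1/2}\rvec{H}^\herm\rvec{v}_k\|_2^2]+\E[\|\rvec{v}_k\|_2^2]}{p_k|\E[\rvec{h}_k^\herm\rvec{v}_k]|^2}. \]
The right-hand side is invariant under multiplying $\rvec{v}_k$ by a nonzero deterministic scalar, so by Proposition~\ref{prop:equivalence}(i) maximizing $\mathrm{SINR}_k$ over $\set{V}_k'$ is equivalent to minimizing this ratio over the affine slice $\set{S}\eqdef\{\rvec{v}_k\in\set{V}_k:\sqrt{p_k}\,\E[\rvec{h}_k^\herm\rvec{v}_k]=1\}\subseteq\set{V}_k'$, on which the ratio equals $\mathrm{MSE}_k(\rvec{v}_k,\vec{p})+1$. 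Hence $u_k(\vec{p})^{-1}=\inf_{\rvec{v}_k\in\set{S}}\mathrm{MSE}_k(\rvec{v}_k,\vec{p})$.

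It then remains to connect the global minimizer $\rvec{v}_k^\star$ over $\set{V}_k$ with the minimizer over the slice $\set{S}$. Testing the normal equation with $\rvec{w}=\rvec{v}_k^\star$ shows that $\sqrt{p_k}\,\E[\rvec{h}_k^\herm\rvec{v}_k^\star]$ equals the real quantity $\E[\|\mathrm{diag}(\vec{p})^{1/2}\rvec{H}^\herm\rvec{v}_k^\star\|_2^2]+\E[\|\rvec{v}_k^\star\|_2^2]$, which is strictly positive since $\rvec{v}_k^\star\neq\vec{0}$ (otherwise the normal equation would force $\E[\rvec{h}_k^\herm\rvec{w}]=0$ for all $\rvec{w}\in\set{V}_k$, contradicting $\set{V}_k'\neq\emptyset$). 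Thus $\rvec{v}_k^\star\in\set{V}_k'$ and $\hat{\rvec{v}}_k\eqdef\rvec{v}_k^\star/(\sqrt{p_k}\,\E[\rvec{h}_k^\herm\rvec{v}_k^\star])$ lies in $\set{S}$. Restricting the normal equation to the direction space $\set{S}_0\eqdef\{\rvec{w}\in\set{V}_k:\E[\rvec{h}_k^\herm\rvec{w}]=0\}$ of $\set{S}$ kills its right-hand side, and after dividing by $\sqrt{p_k}\,\E[\rvec{h}_k^\herm\rvec{v}_k^\star]$ the identity becomes exactly the (necessary and sufficient, by convexity) first-order condition for $\hat{\rvec{v}}_k$ to minimize $\mathrm{MSE}_k(\cdot,\vec{p})$ over $\set{S}$. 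Therefore $\mathrm{MSE}_k(\hat{\rvec{v}}_k,\vec{p})=u_k(\vec{p})^{-1}$, and combining the scale invariance of $\mathrm{SINR}_k$ with the identity above evaluated on $\set{S}$ gives $\mathrm{SINR}_k(\rvec{v}_k^\star,\vec{p})=\mathrm{SINR}_k(\hat{\rvec{v}}_k,\vec{p})=\mathrm{MSE}_k(\hat{\rvec{v}}_k,\vec{p})^{-1}=u_k(\vec{p})$.

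The step I expect to be most delicate is not the algebra but the infinite-dimensional/measure-theoretic care underlying it: verifying that $\set{V}_k$ is closed, and handling the fact that $\mathrm{MSE}_k(\cdot,\vec{p})$ may take the value $+\infty$ (a beamformer in $\set{H}^{NL}$ need not have finite ``energy through the channel'' $\E[\|\mathrm{diag}(\vec{p})^{1/2}\rvec{H}^\herm\rvec{v}_k\|_2^2]$), so the normal equation and the first-order conditions are only meaningfully tested against directions of finite channel energy --- which is nevertheless sufficient, since along the remaining directions $\mathrm{MSE}_k$ becomes $+\infty$ and cannot be a descent direction. If one prefers to avoid these technicalities, the existence, uniqueness, and characterization of $\rvec{v}_k^\star$ can be imported verbatim from~\cite{miretti2021team}, leaving only the purely algebraic steps relating $\mathrm{MSE}_k$ to $u_k(\vec{p})$.
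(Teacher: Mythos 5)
Your proof is correct, and it rests on the same core observation as the paper's --- the scale invariance of $\mathrm{SINR}_k$ on the subspace $\set{V}_k$ and the resulting MMSE--SINR duality --- but the execution differs. The paper inserts an inner minimization over a complex scalar $\beta$ and evaluates it in closed form, which yields the identity $\inf_{\rvec{v}_k\in\set{V}_k}\mathrm{MSE}_k(\rvec{v}_k,\vec{p}) = (1+u_k(\vec{p}))^{-1}$ directly as a chain of equalities over infima; the conclusion then follows because this infimum is $<1$ (by $\set{V}_k'\neq\emptyset$ and Proposition~\ref{prop:equivalence}(i)), which forces $\rvec{v}_k^\star\neq\vec{0}$ a.s.\ and makes the chain tight at $\rvec{v}_k^\star$. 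You instead normalize onto the affine slice $\sqrt{p_k}\,\E[\rvec{h}_k^\herm\rvec{v}_k]=1$ and transfer optimality from $\set{V}_k$ to the slice via the normal equation. Both routes are valid; the paper's avoids first-order conditions altogether (so it needs no discussion of G\^ateaux differentiability or of directions along which the MSE is infinite), while yours additionally produces the variational characterization of $\rvec{v}_k^\star$ and makes explicit why $\rvec{v}_k^\star\in\set{V}_k'$ (testing the normal equation at $\rvec{v}_k^\star$ shows $\E[\rvec{h}_k^\herm\rvec{v}_k^\star]$ is real and positive), a fact the paper obtains only indirectly from $\inf\mathrm{MSE}_k<1$. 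Your caveat about $\mathrm{MSE}_k$ possibly taking the value $+\infty$ on $\set{H}^{NL}$ is a genuine technical point that the paper also sidesteps by delegating existence and uniqueness to \cite{miretti2021team}; handling it as you suggest, or importing that part verbatim as the paper does, is equally acceptable.
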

\begin{proof} 
Existence and uniqueness of the solution  follows from \cite[Lemma 4]{miretti2021team}. For the second statement, we observe that  
\begin{align*}
&\inf_{\rvec{v}_k\in\mathcal{V}_k}\mathrm{MSE}_k(\rvec{v}_k,\vec{p}) \\
&\overset{(a)}{=} \inf_{\substack{\rvec{v}_k\in\set{V}_k\\ \E[\|\rvec{v}_k\|_2^2] \neq 0}} \inf_{\beta \in \stdset{C}} \mathrm{MSE}_k(\beta\rvec{v}_k,\vec{p}) \\
&\overset{(b)}{=} \inf_{\substack{\rvec{v}_k\in\set{V}_k\\ \E[\|\rvec{v}_k\|_2^2] \neq 0}} 1-\dfrac{p_k|\E[\rvec{h}_k^\herm\rvec{v}_k]|^2}{\sum_{j\in\set{K}}p_j\E[|\rvec{h}_j^\herm\rvec{v}_k|^2]+\E\left[\|\rvec{v}_k\|_2^2\right]} \\
&=  \inf_{\substack{\rvec{v}_k\in\set{V}_k\\ \E[\|\rvec{v}_k\|_2^2] \neq 0}}\dfrac{1}{1+\mathrm{SINR}_k(\rvec{v}_k,\vec{p})} = \dfrac{1}{1+u_k(\vec{p})} \overset{(c)}{<} 1,
\end{align*}
where $(a)$ follows from $(\forall \beta\in \stdset{C})(\forall \rvec{v}_k \in \mathcal{V}_k)$ $\beta \rvec{v}_k \in\mathcal{V}_k$, $(b)$ follows from standard minimization of scalar quadratic forms as in the proof of the UatF bound (see, e.g., \cite{marzetta2016fundamentals,massivemimobook,miretti2021team}), and $(c)$ follows from $\set{V}_k'\neq\emptyset$ and Proposition~\ref{prop:equivalence}(i). If $\rvec{v}_k^\star \neq \vec{0}$ a.s., the identity $u_k(\vec{p}) = \mathrm{SINR}_k\left(\rvec{v}_k^\star,\vec{p}\right)$ follows readily from the above chain of equalities. The proof is concluded by observing that $\inf_{\rvec{v}_k\in\mathcal{V}_k}\mathrm{MSE}_k(\rvec{v}_k,\vec{p})<1$ implies $\rvec{v}_k^\star \neq \vec{0}$ a.s..  
\end{proof}
The solution to \eqref{eq:MSE}, which can be interpreted as the best distributed approximation of a regularized zero-forcing beamformer, can be obtained via the recently developed \emph{team} MMSE beamforming framework given by \cite{miretti2021team}. Although presented in the context of downlink beamforming with perfect message sharing, the method in \cite{miretti2021team} directly applies to the uplink case and to user-centric clustering \cite{miretti2021team2}. This method states that the solution to \eqref{eq:MSE} corresponds to the unique solution to a linear feasibility problem, which, depending on the CSI structure, can be solved explicitly or approximately via efficient numerical techniques. We remark that this method can be applied to fairly general CSI structures, for instance involving \textit{quantized} or \textit{delayed} information sharing, or exploiting the peculiarities of efficient fronthauls such as in the so-called \emph{radio stripe} concept \cite{miretti2021team}. 

To keep the proposed algorithms general, in the following section we do not specify the setup (i.e., $\rvec{H}$ and the information structure $\set{V}_1,\ldots,\set{V}_K$), and we simply assume that a solution to \eqref{eq:MSE} is available. However, as an example, our numerical results consider the particular case of local CSI as in  \cite{ngo2017cell}, together with a simple user-centric clustering rule. 


\subsection{Proposed algorithms}
We now have all the elements for deriving an algorithmic solution to Problem~\eqref{eq:maxmin}, i.e., to perform joint optimal long-term power control and distributed beamforming design. Specifically, we propose to solve Problem~\eqref{eq:maxmin} via the fixed-point iterations in \eqref{eq:italg}, as summarized below:
\begin{algorithmic}
\Require $\vec{p} \in \stdset{R}_{++}^K$
\Repeat
\For{$k\in\set{K}$}
\State $\rvec{v}_k \gets  \arg\min_{\rvec{v}_k\in\set{V}_k}\mathrm{MSE}_k(\rvec{v}_k,\vec{p})$ 
\EndFor
\State $\vec{t} \gets \begin{bmatrix}
\frac{\omega_1p_1}{\mathrm{SINR}_1(\rvec{v}_1,\vec{p})} & \ldots & \frac{\omega_Kp_K}{\mathrm{SINR}_K(\rvec{v}_K,\vec{p})}
\end{bmatrix}^\T$
\State $\vec{p}\gets \frac{P}{\|\vec{t}\|_{\infty}}\vec{t}$
\Until{no significant progress is observed.}
\end{algorithmic}
Asymptotic convergence to an optimal solution is guaranteed by Proposition~\ref{prop:fixed_point}, Proposition~\ref{prop:equivalence}, and Proposition~\ref{prop:MSE}. Clearly, the above algorithm relies on solving Problem~\eqref{eq:MSE} at each iteration, which, as already mentioned, can be done following \cite{miretti2021team,miretti2021team2}. Note that $\rvec{v}_k$ is a \textit{function} of the CSI, and not a deterministic vector. In practice, as we will see in the numerical examples, storing $\rvec{v}_k$ during the algorithm execution means storing its long-term parameters (e.g., the regularization factor in a regularized channel inversion block).  

An intuitive variant of the first algorithm proposed above alternates between distributed beamforming optimization and power control, and it is summarized below:
\begin{algorithmic}
\Require $\vec{p} \in \stdset{R}_{++}^K$
\Repeat
\For{$k\in\set{K}$}
\State $\rvec{v}_k \gets  \arg\min_{\rvec{v}_k\in\set{V}_k}\mathrm{MSE}_k(\rvec{v}_k,\vec{p})$ 
\EndFor
\State $\vec{p}\gets\arg\max_{\|\vec{p}\|_\infty \leq P}  \min_{k\in\set{K}} \omega_k^{-1} \mathrm{SINR}_k(\rvec{v}_k,\vec{p})$
\Until{no significant progress is observed.}
\end{algorithmic}
In addition to solving Problem~\eqref{eq:MSE}, this algorithm requires a solution to $\max_{\|\vec{p}\|_\infty \leq P}  \min_{k\in\set{K}} \omega_k^{-1} \mathrm{SINR}_k(\rvec{v}_k,\vec{p})$ for given beamformers $(\rvec{v}_1,\ldots,\rvec{v}_K)$, which is obtained in closed form in \cite{miretti2022power}. The advantage is that this power control step maximizes the improvement in utility. The disadvantage is that performing this power control step is more complex than performing a fixed point iteration. However, this increase in complexity is in practice negligible when compared to the complexity of solving Problem~\eqref{eq:MSE}. Convergence properties of this algorithm will be studied in a future work, and they are here only tested using numerical experiments.

\section{Numerical experiments and final remarks}
\subsection{Simulation setup}
\begin{figure}
\centering
\includegraphics[width=0.8\columnwidth]{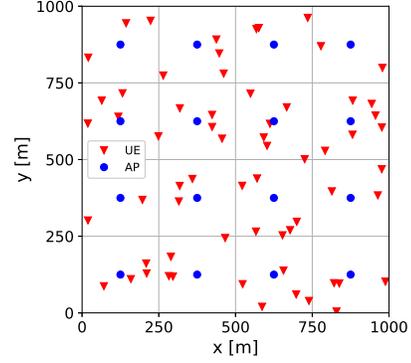}
\caption{Pictorial representation of the simulated setup: $K=64$ UEs uniformly distributed within a squared service area of size $1\times 1~\text{km}^2$, and $L=16$ regularly spaced APs with $N=8$ antennas each. Each UE is jointly served by a cluster of $Q=4$ APs offering the strongest channel gains.}
\label{fig:network}
\end{figure}
We consider the network depicted in Figure~\ref{fig:network}, where $K=64$ UEs are uniformly distributed within a squared service area of size $1\times 1~\text{km}^2$, and $L=16$ regularly spaced APs with $N=8$ antennas each. By neglecting for simplicity channel correlation, we let each sub-vector $\rvec{h}_{l,k}$ of $\rvec{h}_k^{\herm} =: [\rvec{h}_{1,k}^\herm, \ldots \rvec{h}_{L,k}^\herm]$ be independently distributed as $\rvec{h}_{l,k} \sim \CN\left(\vec{0}, \gamma_{l,k}\vec{I}_N\right)$, where $\gamma_{l,k}>0$ denotes the channel gain between AP $l$ and UE $k$. We follow the same 3GPP-like path-loss model adopted in \cite{demir2021} for a $2$ GHz carrier frequency:
\begin{equation*}
\gamma_{l,k} = -36.7 \log_{10}\left(D_{l,k}/1 \; \mathrm{m}\right) -30.5 + Z_{l,k} -\sigma^2 \quad \text{[dB]},
\end{equation*}
where $D_{l,k}$ is the distance between AP $l$ and UE $k$ including a difference in height of $10$ m, and $Z_{l,k}\sim \CN(0,\rho^2)$ [dB] are shadow fading terms with deviation $\rho = 4$. The shadow fading is correlated as $\E[Z_{l,k}Z_{j,i}]=\rho^22^{-\frac{\delta_{k,i}}{9 \text{ [m]}}}$ for all $l=j$ and zero otherwise, where $\delta_{k,i}$ is the distance between UE $k$ and UE $i$. The noise power is 
$\sigma^2 = -174 + 10 \log_{10}(B) + F$ [dBm],
where $B = 20$ MHz is the bandwidth, and $F = 7$ dB is the noise figure. The  power budget is set to $P = 20$ dBm.

\subsection{Local CSI and path loss based user-centric clustering}
We solve Problem~\eqref{eq:maxmin} by focusing on the following simple user-centric distributed information structure with no short-term CSI sharing. We assume that each UE $k$ is served only by its $Q=4$ strongest APs, i.e., by the subset of APs indexed by $\set{L}_k\subseteq \set{L}$, where each set $\set{L}_k$ is formed by ordering $\set{L}$ w.r.t. decreasing $\gamma_{l,k}$ and by keeping only the first $Q$ elements. Then, we assume each AP $l$ to acquire local CSI $\hat{\rvec{H}}_l\eqdef[\hat{\rvec{h}}_{l,1},\ldots,\hat{\rvec{h}}_{l,K}]$,  
\begin{equation*}
(\forall k \in\set{K})~\hat{\rvec{h}}_{l,k} \eqdef \begin{cases}
\rvec{h}_{l,k} & \text{if } l \in \set{L}_k,\\
\E[\rvec{h}_{l,k}] & \text{otherwise}.
\end{cases}
\end{equation*}
This model mirrors the fully distributed implementation proposed in \cite{bjornson2020scalable}. For simplicity, we neglect estimation noise, and we focus on a model where small-scale fading coefficients are either perfectly known at some APs or completely unknown. Under this model, the optimal solution $[\rvec{v}^\herm_{1,k},\ldots,\rvec{v}^\herm_{L,k}] \eqdef \rvec{v}^\herm_k \in \set{V}_k$ to Problem~\eqref{eq:MSE} is given by \cite{miretti2021team,miretti2021team2}
\begin{equation}\label{eq:TMMSE}
(l\in \set{L})(k\in\set{K})~\rvec{v}_{l,k} = \rvec{V}_l\vec{c}_{l,k}, 
\end{equation}
where: 
\begin{itemize}
\item $\rvec{V}_l \eqdef \left(\hat{\rvec{H}}_l\mathrm{diag}(\vec{p})\hat{\rvec{H}}_l^\herm + \vec{\Psi}_l\right)^{-1}\hat{\rvec{H}}_l\mathrm{diag}(\vec{p})^{\frac{1}{2}}$ is a \textit{local} MMSE  beamforming stage \cite{bjornson2020scalable} with augmented noise covariance $\vec{\Psi}_l \eqdef (1+\sum_{\{k\in \set{K}|l\notin\set{L}_k\}}p_k\gamma_{l,k})\vec{I}_N$;
\item $\vec{c}_{l,k}\in \stdset{C}^K$ is a statistical beamforming stage given by the unique solution to the linear system of equations
\begin{equation*}
\begin{cases}\vec{c}_{l,k} + \sum_{j \in \set{L}_k \backslash \{l\}}\vec{\Pi}_j \vec{c}_{j,k} = \vec{e}_k & \forall l \in \set{L}_k, \\
\vec{c}_{l,k} = \vec{0}_{K\times 1} & \text{otherwise,}
\end{cases}
\end{equation*}
where $\vec{\Pi}_l \eqdef \E\left[\mathrm{diag}(\vec{p})^{\frac{1}{2}}\hat{\rmat{H}}_l^\herm\rmat{V}_l\right]$.
\end{itemize} 
The main difference between \eqref{eq:TMMSE} and the best known heuristic, called local MMSE beamforming with optimal \textit{large-scale fading decoding} (see, e.g., \cite{demir2021}), is that the latter considers only a scalar statistical stage, i.e., only one component of $\vec{c}_{l,k}$ is different than zero. We refer to \cite{miretti2021team} for additional comparisons against known heuristics, and for the extension to noisy CSI. 

\subsection{Illustration of the proposed algorithms}
\begin{figure}
\centering
\includegraphics[width=0.9\columnwidth]{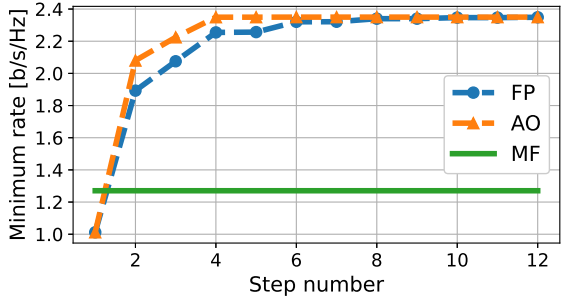}
\caption{Minimum achievable rate for a given step of the proposed fixed point (FP) and alternate optimization (AO) algorithms. The odd steps perform optimal distributed beamforming design, while the even steps update the long-term power control coefficients. Both algorithms attain the optimal max-min fair point with few iterations. As a baseline, we plot the performance of classical matched filter (MF) beamforming with optimal power control \cite{demir2021}.}
\label{fig:distributed}
\end{figure}
Figure~\ref{fig:distributed} illustrates several steps of the proposed algorithms, applied to the computation of the optimal \textit{max-min} fair point of $\set{R}$, i.e., the solution to Problem~\eqref{eq:maxmin} for $\vec{\omega} = \vec{1}$. The initialization is the full power case $\vec{p}=P\vec{1}$, which turns out to be highly suboptimal. Both algorithms require very few iterations for approaching the optimum, and, not surprisingly, alternate optimization  outperforms fixed point iterations, at the price of a slight increase in complexity. We point out that, although not shown in this paper, fast convergence has been observed also under many different simulation parameters. 

It is worthwhile underlying that the implemented algorithms operate only by updating deterministic coefficients using long-term information. In particular, when it comes to optimal distributed beamforming design, the implemented routines update only the long-term parameters $\vec{c}_{l,k}$ and $\vec{\Psi}_l$ in \eqref{eq:TMMSE}, based on a fixed training set of $N_{\mathrm{sim}}=1000$ channel realizations and channel gain information. 


\bibliographystyle{IEEEbib}
\bibliography{IEEEabrv,refs}

\end{document}